\def\ReportVersion{}
\newcommand{\reportOnly}[1]{#1}
\newcommand{\paperOnly}[1]{}
\newcommand{\reportOnly}[1]{}
\newcommand{\paperOnly}[1]{#1}
\newcommand{\inactive}{{\sf Inactive}}
\newcommand{\activeContending}{{\sf ActiveContending}}
\newcommand{\activeNonContending}{{\sf ActiveNonContending}}
\newcommand{\recharging}{{\sf Recharging}}
\renewcommand{\d}[1]{\ensuremath{\operatorname{d}\!{#1}}}
\newcommand{\gl}[1]{{\color{blue} \textbf{Peppe}: #1}}
\newcommand{\la}[1]{{\color{red} \textbf{Luca}: #1}}
\newcommand{\ys}[1]{{\color{green} \textbf{Youcheng}: #1}}
\newcommand{\gl}[1]{}
\newcommand{\la}[1]{}
\newcommand{\ys}[1]{}
\newtheorem{theorem}{Theorem}
\newtheorem{lemma}{Lemma}
\newtheorem{definition}{Definition}
\title{Multicore CPU Reclaiming: Parallel or Sequential?}
  \title{TECHNICAL REPORT \\ Parallel and sequential reclaiming in multicore
    real-time global scheduling}
 \author{Luca Abeni\\
   University of Trento, Italy\\
   \texttt{luca.abeni@unitn.it}
   \and 
   Giuseppe Lipari \\
   Univ. Lille, CNRS, Centrale Lille, UMR 9189 - CRIStAL, France\\
   \texttt{giuseppe.lipari@univ-lille1.fr}
   \and
   Andrea Parri and Youcheng Sun\\
   Scuola Superiore Sant'Anna, Pisa, Italy\\
   \texttt{\{a.parri, y.sun\}@sssup.it}
 }
 \author{
 \alignauthor
 Luca Abeni\\
       \affaddr{University of Trento}\\
       \email{luca.abeni@unitn.it}
 \alignauthor
 Giuseppe Lipari\\
       \affaddr{Univ. Lille, CNRS, Centrale Lille, UMR 9189 - CRIStAL}\\
       \email{giuseppe.lipari@univ-lille1.fr}
 \alignauthor
 Andrea Parri, Youcheng Sun\\
       \affaddr{Scuola Superiore Sant'Anna}\\
       \email{\{a.parri, y.sun\}@sssup.it}
 }
\begin{document}
\ifdefined\ReportVersion
\else
\CopyrightYear{2016}
\setcopyright{othergov}
\conferenceinfo{SAC 2016,}{April 04-08, 2016, Pisa, Italy}
\isbn{978-1-4503-3739-7/16/04}\acmPrice{\$15.00}
\doi{http://dx.doi.org/10.1145/2851613.2851743}
\fi

\maketitle

\begin{abstract}
  When integrating hard, soft and non-real-time tasks in general
  purpose operating systems, it is necessary to provide temporal
  isolation 
  so that the timing properties of one task do not depend on the
  behaviour of the others.  However, strict budget enforcement can
  lead to inefficient use of the computational resources in the
  presence of tasks with variable workload. Many resource reclaiming
  algorithms have been proposed in the literature for single processor
  scheduling, but not enough work exists for global scheduling in
  multiprocessor systems. In this \paperOnly{paper}\reportOnly{report}
  we propose two reclaiming algorithms for multiprocessor global
  scheduling and we prove their correctness. \paperOnly{We also
    present their implementation in the Linux kernel and we compare
    their performance on synthetic experiments.}
\end{abstract}

\ifdefined\ReportVersion
\relax
\else
\begin{CCSXML}
<ccs2012>
<concept>
<concept_id>10010520.10010570.10010571</concept_id>
<concept_desc>Computer systems organization~Real-time operating systems</concept_desc>
<concept_significance>500</concept_significance>
</concept>
</ccs2012>
\end{CCSXML}

\ccsdesc[500]{Computer systems organization~Real-time operating systems}

\printccsdesc

\keywords{Real-Time Scheduling, multiprocessor,  Reclaiming unused bandwidth}
\fi

\section{Introduction}
\label{sec:introduction}

The Resource Reservation Framework
\cite{rajkumar1997resource,abeni1998integrating} is an effective
technique to integrate the scheduling of real-time tasks in
general-purpose systems, as demonstrated by the fact that it has been
recently implemented in the Linux kernel~\cite{SPE:SPE2335}. One of
the most important properties provided by resource reservations is
{\em temporal isolation}: the worst-case performance of a task does
not depend on the temporal behaviour of the other tasks running in the
system. This property can be enforced by limiting the amount of time
for which each task can execute in a given
period. 

In some situations, a strict enforcement of the executed runtime (as
done by the hard reservation mechanism that is currently implemented
in the Linux kernel) can be problematic for tasks characterized by
highly-variable, or difficult to predict, execution times: allocating
the budget based on the task' Worst Case Execution Time (WCET) can
result in a waste of computational resources; on the other hand,
allocating it based on a value smaller than the WCET can cause a
certain number of deadline misses.
These issues can be addressed by using a proper {\em CPU reclaiming
  mechanism}, which allows tasks to execute for more than their
reserved time {\bf if spare CPU time is available} and if this
over-execution does not break the guarantees of other real-time tasks.

While many algorithms
(e.g.,~\cite{nogueira2007capacity,lin2005improving,caccamo2000capacity,conf/ecrts/LipariB00})
have been developed for reclaiming CPU time in single-processor
systems, the problem of reclaiming CPU time in multiprocessor (or
multicore) systems has been investigated less. Most of the existing
reclaiming algorithms (see~\cite{lin2005improving} for a summary of
some commonly used techniques) are based on keeping track of the
amount of execution time reserved to some tasks, but not used by them,
and by distributing it between the various active tasks. In a
multiprocessor system, this idea can be extended in two different
ways:
\begin{enumerate}
\item by considering a {\em global} variable that keeps track of the
  execution time {\em not used by all the tasks in the system}
  (without considering the CPUs/cores on which the tasks execute), and
  by distributing such an unused execution time to the tasks.  This
  approach will be referred to as {\em parallel reclaiming} in this
  paper, because the execution time not used by one single task can be
  distributed to multiple tasks that execute in parallel on different
  CPUs/cores;
\item by considering multiple per-CPU/core (per-runqueue, in the Linux
  kernel \emph{slang}) variables each representing the unused bandwidth
  that can be distributed to the tasks executing on the corresponding
  CPU/core. This approach will be referred to as {\em sequential
    reclaiming} in this paper, because 
  the execution time not used by one single task is associated to a
  CPU/core, and cannot be distributed to multiple tasks executing
  simultaneously.
\end{enumerate}

This paper compares the two mentioned approaches by extending the GRUB
(Greedy Reclamation of Unused Bandwidth) \cite{conf/ecrts/LipariB00}
reclaiming algorithm to support multiple processors according to
sequential reclaiming and parallel reclaiming.  The comparison is
performed both from the theoretical point of view (by formally
analysing the schedulability of the obtained algorithm) and by running
experiments on a real implementation of this extension, named M-GRUB.
Such implementation of M-GRUB reclaiming (that can do either parallel or
sequential reclaiming) is based on the Linux kernel and extends the
{\tt SCHED\_DEADLINE} scheduling policy.

The \paperOnly{paper}\reportOnly{report} is organised as follows: in
Section~\ref{sec:related-work} we recall the related work. In
Section~\ref{sec:sys-model} we present our system model and introduce
the definitions and concepts used in the paper. The algorithms and
admission tests used as a starting point for this work are then
presented in Section~\ref{sec:background}.  The two proposed
reclaiming rules are described in
Section~\ref{sec:reclaiming-rule}. \reportOnly{In Section
  \ref{sec:proof-correctness} we prove the correctness of the two
  algorithms.} \paperOnly{In Section~\ref{sec:implementation} we
  discuss the implementation details and in
  Section~\ref{sec:experiments} we present the results of our
  experiments.} Finally, in Section~\ref{sec:conclusions} we present
our conclusions.

\section{Related work}
\label{sec:related-work}

The problem of reclaiming unused capacity for resource reservation
algorithms has been mainly addressed in the context of single
processor scheduling.

The CASH (CApacity SHaring) algorithm \cite{caccamo2000capacity} uses
a queue of unused budgets (also called \emph{capacities}) that is
exploited by the executing tasks. However, the CASH algorithm is only
useful for periodic tasks. Lin and Brandt proposed BACKSLASH
\cite{lin2005improving}, a mechanism based on capacities that
integrates four different principles for slack reclaiming. Similar
techniques still based on capacities are used by Nogueira and
Pinho~\cite{nogueira2007capacity}. 

The GRUB algorithm~\cite{conf/ecrts/LipariB00} modifies the rates at
which servers' budgets are decreased so to take advantage of free
bandwidth. The algorithm can be also used by aperiodic tasks. We
present the GRUB algorithm in Section \ref{sec:background} as it is
used as a basis for our multiprocessor reclaiming schemes. For fixed
priority scheduling, Bernat et al. proposed to reconsider past
execution so to take advantage of the executing slack
\cite{bernat2004rewriting}. 

Reclaiming CPU time in multiprocessor systems is more difficult
(especially if global scheduling is used), as shown by some previous
work~\cite{Bra07} that ends up imposing strict constraints
on the distribution of spare budgets
to avoid 
compromising timing isolation: spare CPU time can only be donated by
hard real-time tasks to soft real-time tasks -- which are scheduled in
background respect to hard tasks -- and reservations must be properly
dimensioned.  

To the authors' best knowledge, the only previous algorithm that
explicitly supports CPU reclaiming on all the real-time tasks running
on multiple processors without imposing additional constraints (and
has been formally proved to be correct) is
M-CASH~\cite{pellizzoni2008m}. It is an extension of the CASH
algorithm to the multiprocessor case, which additionally includes a
rule for reclaiming unused bandwidth by aperiodic tasks.  The
algorithm uses the utilisation based test by Goossens, Funk and Baruah
\cite{goossens2003priority} as a base schedulability test for the
servers. It distinguishes two kinds of servers: servers for periodic
tasks (whose utilisation is reclaimed using capacity-based mechanism)
and servers for aperiodic tasks, whose bandwidth is reclaimed with a
technique similar to the parallel reclaiming that we propose in
Section~\ref{sec:parallel-recl}. However, M-CASH has never been
implemented in a real OS kernel. On the other hand, the GRUB
algorithm~\cite{conf/ecrts/LipariB00} has been implemented in the
Linux kernel~\cite{Abe14RTLWS}, after extending the algorithm to
support multiple CPUs, but the multiprocessor extensions used in this
implementation have not been formally analysed nor validated from a
theoretical point of view.

\section{System model and definitions}
\label{sec:sys-model}

We consider a set of~$n$ real-time tasks~$\tau_i$ scheduled by a set
of~$n$ servers~$S_i$ ($i = 1, \dots, n$).

A real-time \emph{task} $\tau_i$ is a (possibly infinite) sequence of
jobs $J_{i,k}$: each job has an arrival time $a_{i,k}$, a computation
time $c_{i,k}$ and a deadline $d_{i,k}$. {\em Periodic} real-time
tasks are characterised by a period $T_i$, and their arrival time can
be computed as $a_{i,k+1} = a_{i,k} + T_i$. {\em Sporadic} real-time
tasks wait for external events with a minimum inter-arrival time, also
called $T_i$, so $a_{i,k+1} \geq a_{i,k} + T_i$.
Periodic and sporadic tasks are usually associated a relative
deadline $D_i$ such that $d_{i,k} = a_{i,k} + D_i$.




A \emph{server} is an abstract entity used by the scheduler to reserve
a fraction of CPU-time to a task. Each server $S_i$ is characterised
by the following parameters: $P_i$ is the \emph{server period} and it
represents the granularity of the reservation; $U_i$ is the fraction
of reserved CPU-time, also called \emph{utilisation} factor or
\emph{bandwidth}. In each period, a server is reserved at least a
\emph{maximum budget}, or \emph{runtime}, $Q_i = U_i P_i$.

The execution platform consists of $m$ identical processors (Symmetric
Multiprocessor Platform, or SMP).
In this paper we use the Global Earliest Deadline
First (G\-/EDF) scheduling algorithm: all the tasks are ordered by
increasing deadlines of the servers, and the $m$ active tasks
with the earliest deadlines are scheduled for execution on the $m$
CPUs.

The logical priority queue of G-EDF is implemented in Linux by a set
of \emph{runqueues}, one per each CPU/core, and some accessory data
structures for making sure that the $m$ highest-priority jobs are
executed at each instant (see \cite{lelli2011efficient} for a
description of the implementation).

\section{Background}
\label{sec:background}

In this section we first recall the Constant Bandwidth Server (CBS)
algorithm~\cite{abeni1998integrating,Baruah2002} for both single and
multiprocessor systems. We then recall the GRUB
algorithm~\cite{conf/ecrts/LipariB00}, an extension of the
CBS. Finally, we present two schedulability tests for Global EDF.



\subsection{CBS and GRUB}
\label{sec:cbs-grub}

As anticipated in Section~\ref{sec:sys-model}, each server is
characterised by a period $P_i$, a bandwidth $U_i$ and a maximum
budget $Q_i = U_i P_i$.  In addition, each server maintains the
following dynamic variables: the \emph{server deadline} $d_i$,
denoting at every instant the server priority, and the \emph{server
  budget} $q_i$, indicating the remaining computation time allowed in
the current server period.

At time $t$, a server can be in one of the following states:
\activeContending, if there is some job of the served task that has
not yet completed; \activeNonContending, if all jobs of the served
task have completed, but the server has already consumed all the
available bandwidth (see the transition rules below for a
characterisation of this state); \inactive, if all jobs of the
server's task have completed and the server bandwidth can be reclaimed
(see the transition rules below), and \recharging, if the server has
jobs to execute, but the budget is currently exhausted and needs to be
recharged (this state is generally known as ``throttled'' in the Linux
kernel, or ``depleted'' in the real-time literature).

\reportOnly{
  \begin{figure}[tbh]
    \begin{center}
      \includegraphics[width=.75\columnwidth]{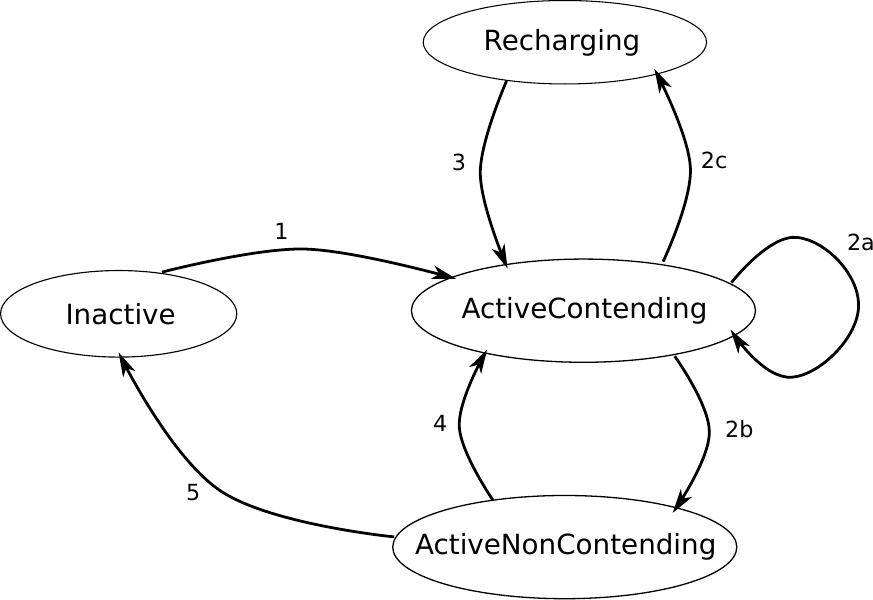}
    \end{center}
    \caption{State transition diagram for CBS and GRUB.}
    \label{fig:dia-mgrub}
  \end{figure}

  The states and the corresponding transitions are shown in Figure
  \ref{fig:dia-mgrub}.
}

The EDF algorithm chooses for execution the $m$ tasks with the earliest
server deadlines among the \activeContending\ servers.
Initially, all servers are in the \inactive\ state and their state
change according to the following rules:
\begin{enumerate}
\item[1.] When a job of a task arrives at time $t$, if the
  corresponding server is \inactive, it goes in \activeContending\ and
  its budget and deadlines are modified as: $q_i \leftarrow U_i P_i$
  and $d_i \leftarrow t + P_i$

\item[2a.] When a job of $S_i$ completes and there is another job
  ready to be executed, the server remains in \activeContending\ with
  all its variables unchanged;

\item[2b.] When a job of $S_i$ completes, and there is no other job
  ready to be executed, the server goes in \activeNonContending.

\item[2c.] If at some time $t$ the budget $q_i$ is exhausted, the
  server moves to state \recharging, and it is removed from the ready
  queue. The corresponding task is suspended and a new server is
  executed.

\item[3.] When $t = d_i$, the server variables are updated as $d_i
  \leftarrow d_i + P_i$ and $q_i \leftarrow U_i P_i$. The server is
  inserted in the ready queue and the scheduler is called to select
  the earliest deadline server, hence a context switch may happen.

\item[4.] If a new job arrives while the server is in
  \activeNonContending, the server moves to \activeContending\ without
  changing its variables.

\item[5.] A server remains in \activeNonContending\ only
  if $q_i < (d_i - t) U_i$.
  When $q_i \geq (d_i - t)U_i$ the server moves to \inactive.
\end{enumerate}

Only servers that are \activeContending\ can be selected for execution
by the EDF scheduler. If $S_i$ does not execute, its budget is not changed.
When $S_i$ is executing, its budget is updated as $\d{q_i} = -\d{t}$.

When serving a task, a
server generates a set of \emph{server jobs}, each one with an arrival
time, an execution time and a deadline as assigned by the algorithm's
rules. For example, when the server at time $t$ moves from \inactive\
to \activeContending\, a new server job is generated with arrival time
equal to $t$, deadline equal to $d_i = t + P_i$, and worst-case
computation time equal to $Q_i$. A similar thing happens when the
server moves from \recharging\ to \activeContending, and so on.

We say that a server is \emph{schedulable} if every server job can
execute the budget $Q_i$ before the corresponding server job deadline.
It can be proved that the \emph{demand bound function}~$\mathsf{dbf}$
(see \cite{baruah1990algorithms} for a definition) generated by the
server jobs of~$S_i$ is bounded from above as follows: $dbf(t) \leq
U_i t$ for each~$t$. Hence, for single processor systems it is
possible to use the utilisation test of EDF as an admission control
test, i.e.,
\begin{equation}
  \label{eq:sched-test-single}
   \sum_{i=1}^n U_i \leq 1.
\end{equation}

\reportOnly{
  We recall the following two fundamental results.
  \begin{theorem}[Theorem~1 in \cite{abeni1998integrating}]
    \label{thm:temporal-isolation}
    If Equation (\ref{eq:sched-test-single}) holds, then all server jobs
    meet their deadlines (i.e., all servers are schedulable).
  \end{theorem}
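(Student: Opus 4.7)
The plan is to reduce the claim to the classical EDF schedulability criterion for sporadic task systems on a single processor, using as input the per\-/server demand bound just recalled in the paragraph preceding the theorem, namely $\mathsf{dbf}_i(t) \leq U_i t$ for every $t \geq 0$ and every $i$.

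First I would view each server $S_i$ as generating a sequence of server jobs, each with a release time, a worst\-/case execution requirement $Q_i$, and a deadline assigned by the CBS/GRUB rules. Collecting these across all servers produces a job stream whose aggregate demand in $[0,t]$ is $\sum_{i=1}^n \mathsf{dbf}_i(t)$. Combining the per\-/server bound with the hypothesis $\sum_{i=1}^n U_i \leq 1$ yields
\begin{equation*}
  \sum_{i=1}^n \mathsf{dbf}_i(t) \;\leq\; \Bigl(\sum_{i=1}^n U_i\Bigr)\, t \;\leq\; t
  \qquad \text{for all } t \geq 0.
\end{equation*}

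Next I would invoke the standard characterisation of EDF schedulability \cite{baruah1990algorithms}: a set of independent jobs is feasible under EDF on a uniprocessor whenever the aggregate demand never exceeds the processor capacity in any interval, i.e., $\sum_i \mathsf{dbf}_i(t) \leq t$ for all $t$. The inequality above discharges exactly this condition, so EDF meets every server job deadline; equivalently, each server $S_i$ completes its allotted budget $Q_i$ before the corresponding server deadline, which is the definition of schedulability in the theorem.

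The main obstacle here is not the two\-/line reduction above but the underlying bound $\mathsf{dbf}_i(t) \leq U_i t$, since server jobs are not standard sporadic jobs: their release times and deadlines depend dynamically on task arrivals and on the CBS/GRUB state automaton, and the guarantee that $S_i$ never accumulates demand faster than rate $U_i$ ultimately rests on the inactivation condition $q_i \geq (d_i - t) U_i$ enforced in rule~5. Because the excerpt cites this bound as a known fact, I would not re\-/derive it and would limit the proof to the short demand\-/bound argument sketched above.
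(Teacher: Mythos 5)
Your proposal is correct and follows essentially the same route as the paper: the paper does not prove this theorem itself (it recalls it from the CBS reference), and its only justification is precisely the demand\-/bound sketch in the preceding paragraph --- $\mathsf{dbf}_i(t) \leq U_i t$ per server, summed against $\sum_i U_i \leq 1$, discharged by the uniprocessor processor\-/demand criterion for EDF. You also correctly locate the genuine content in the cited bound $\mathsf{dbf}_i(t) \leq U_i t$ (which hinges on the inactivation rule), and leaving that to the cited reference matches the paper's own treatment.
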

  
  \begin{theorem}[Lemma~1 in \cite{abeni1998integrating}]
    \label{thm:hard-schedulability}
    Assuming that Equation (\ref{eq:sched-test-single})
    holds, if a periodic (sporadic) hard real\-/time tasks with WCET $C_i$
    and period (minimum inter\-/arrival time) $T_i$ is assigned a server
    with budget $Q_i \geq C_i$ and period $P_i \leq T_i$, then the task
    will meet all its deadlines.
\end{theorem}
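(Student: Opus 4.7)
My plan is to strengthen the statement to a claim I can prove by induction on~$k$: at every arrival time $a_{i,k}$ the server $S_i$ is in the \inactive\ state. Granting this invariant, the theorem is immediate. Rule~1 activates the server with $q_i = Q_i \geq C_i \geq c_{i,k}$ and $d_i = a_{i,k} + P_i \leq a_{i,k} + T_i$, and Theorem~\ref{thm:temporal-isolation} then guarantees that this server job delivers $Q_i$ units of execution by $d_i$, so $J_{i,k}$ finishes within its deadline $a_{i,k} + T_i$.

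The base case $k=1$ is immediate since all servers start \inactive. For the inductive step I would trace the trajectory of $S_i$ across $[a_{i,k}, a_{i,k+1})$. From \inactive, rule~1 activates it with full budget $Q_i$ and deadline $a_{i,k} + P_i$. Because $c_{i,k} \leq C_i \leq Q_i$, the job completes before the budget is exhausted, so rule~2c is never triggered; rule~2b then moves the server to \activeNonContending\ with residual budget $q_i = Q_i - c_{i,k}$ and unchanged deadline $d_i = a_{i,k} + P_i$.

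The crucial computation is the \activeNonContending-to-\inactive\ transition time prescribed by rule~5. Solving $q_i = U_i(d_i - t^*)$ and using $Q_i = U_i P_i$ gives $t^* = a_{i,k} + c_{i,k}/U_i$. From the hypotheses $Q_i \geq C_i$ and $P_i \leq T_i$ one obtains $U_i = Q_i/P_i \geq C_i/T_i$, so $c_{i,k}/U_i \leq C_i/U_i \leq T_i$ and therefore $t^* \leq a_{i,k} + T_i \leq a_{i,k+1}$. No job of $\tau_i$ arrives in the remainder of $[a_{i,k}, a_{i,k+1})$, so rule~4 cannot interfere, and the server is \inactive\ at $a_{i,k+1}$, closing the induction.

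The main obstacle I expect is not the arithmetic but the careful exclusion of alternative rule applications. I must use $Q_i \geq C_i$ specifically to forbid rule~2c during a single job, and I must use the sporadic bound $a_{i,k+1} \geq a_{i,k} + T_i$ together with the bandwidth inequality $U_i \geq C_i/T_i$ to forbid rule~4 from firing before rule~5 does. Once these two interferences are ruled out, the state chain \inactive\ $\to$ \activeContending\ $\to$ \activeNonContending\ $\to$ \inactive\ unfolds cleanly within each inter\-/arrival window, preserving the invariant and hence yielding the deadline guarantee.
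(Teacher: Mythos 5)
Your proof is correct and is essentially the canonical argument for the hard\-/schedulability property of the CBS: the paper itself does not prove this statement but only recalls it from the cited reference, and your induction on the invariant that $S_i$ is \inactive\ at every arrival $a_{i,k}$ (fresh server job with $q_i = Q_i \geq c_{i,k}$ and $d_i = a_{i,k}+P_i \leq a_{i,k}+T_i$, then Theorem~\ref{thm:temporal-isolation}) is exactly that route. The only detail worth tightening is that the server may already satisfy the rule\-/5 condition $q_i \geq (d_i - t)U_i$ at the job\-/completion instant $t_f$ (when the job finishes late, i.e.\ $t_f > t^*$, which can happen since the server's $Q_i$ units may be delivered as late as the interval $[d_i - Q_i, d_i]$), in which case the server becomes \inactive\ at $t_f$ rather than at $t^*$; since $t_f \leq d_i = a_{i,k}+P_i \leq a_{i,k}+T_i \leq a_{i,k+1}$ by Theorem~\ref{thm:temporal-isolation}, the invariant is preserved in that case as well, so the argument goes through once you take the transition instant to be $\max\{t_f, t^*\}$.
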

}

\paperOnly{
  It has been proved that if Equation (\ref{eq:sched-test-single})
  holds, then all servers are schedulable (i.e. all servers jobs will
  complete before their scheduling deadlines). Based on this result, it
  is possible to guarantee the respect of the deadlines of a task by
  setting $P_i \leq T_i$ and $Q_i \geq C_i$ (see the original
  paper~\cite{abeni1998integrating} for a more complete description).
}
  

The CBS algorithm has been extended to multiprocessor global
scheduling in~\cite{Baruah2002}. The authors prove the temporal
isolation and the hard schedulability properties of the algorithm when
using the schedulability test of Goossens, Funk and Baruah
\cite{goossens2003priority}, which we will recall next.

The GRUB algorithm~\cite{conf/ecrts/LipariB00} extends the CBS
algorithm by enabling the reclaiming of unused bandwidth, while
preserving the schedulability of the served tasks. The main difference
between the CBS and GRUB algorithms is the rule for updating the
budget. In the original CBS algorithm, the server budget is updated as
$\d{q_i} = -\d{t}$, independently of the status of the other servers.
To reclaim the excess bandwidth, GRUB maintains one additional global
variable $U_{\mathsf{act}}$, the total utilisation of all active
servers
\[
  U_{\mathsf{act}} \coloneqq \sum_{S_i \notin \mbox{\inactive}} U_i
\]
%
%
and uses it to update the budget $q_i$ as
\begin{equation}
  \label{eq:dq-grub}
  \cfrac{\d{q_i}}{\d{t}}  = -(1 - (U_{\mathsf{sys}} - U_{\mathsf{act}}))
\end{equation}
where $U_{\mathsf{sys}}$ is the utilisation that the system reserves
to the set of all servers.  As in the original CBS algorithm, the
budget is not updated when the server is not executing. The executing
server gets all the free bandwidth in a \emph{greedy} manner, hence
the name of the algorithm.  The GRUB algorithm preserves the Temporal
Isolation and Hard Schedulability properties of the CBS
\cite{lipari2000resource}.

\subsection{Admission control tests}
\label{sec:admiss-contr-tests}

When using the CBS or the GRUB algorithm it is important to run an
admission test to check if all the servers' deadlines are
respected. In single processor systems, the utilisation based test of
Equation (\ref{eq:sched-test-single}) is used for both the CBS and the
GRUB algorithm. We now present two different schedulability tests for
the multiprocessor case: an utilisation-based schedulability test for
G-EDF by Goossens, Funk and Baruah~\cite{goossens2003priority}
(referred to as GFB in this paper), and an interference-based
schedulability test for G-EDF proposed by Bertogna, Cirinei and
Lipari~\cite{bertogna2005improved} (referred to as BCL in this paper).

GFB and BCL are not 
the most advanced tests in the literature: in particular, as discussed
in \cite{Bertogna11}, more effective tests -- i.e. tests that can
admit a larger number of task sets -- are now available.  The reason
we chose these two in particular is their low complexity (so they can
be used as on-line admission tests), and the fact that currently we
are able to prove the correctness of the reclaiming rules with respect
to these two tests in particular. In fact, we need to guarantee that
the temporal isolation property continues to hold even when some
budget is donated by one server to the other ones according to some
reclaiming rule.

\paperOnly{At the time of preparation of this paper, we have formally
  proved the correctness of the two reclaiming rules proposed in
  Section \ref{sec:reclaiming-rule} with respect to the GFB and the
  BCL test -- the proofs are not reported here, due to space
  constraints, and can be found in a separate Technical
  Report~\cite{tech-report-mgrub}.}  \reportOnly{Currently, we have
  formally proved the correctness of the two reclaiming rules propose
  in Section \ref{sec:reclaiming-rule} with respect to the GFB and the
  BCL test -- see Section \ref{sec:proof-correctness}.}  Using some
other, more effective, admission control test may be unsafe, hence,
for the moment we restrict our attention to GFB and BCL.

The GFB test is based on the notion of uniform multiprocessor platform,
and it allows to check the schedulability of a task set based on its
utilisation.

\reportOnly{
  \begin{definition}
    A uniform multiprocessor platform is comprised of several
    processors. Each processor $p$ is characterised by its speed
    $\mathsf{speed}(p)$, with the interpretation that a job that
    executes on processor $p$ for $t$ times units, completes
    $\mathsf{speed}(p) \cdot t$ units of execution. Let $\pi$ denote the
    uniform multiprocessor platform. We introduce the following
    notation:
    \begin{align}
      \mathsf{MaxSpeed}_\pi &\coloneqq \max_{p \in \pi} \{ \mathsf{speed}(p) \} \\
      \mathsf{TotSpeed}_\pi &\coloneqq \sum_{p \in \pi} \mathsf{speed}(p).
    \end{align}
  \end{definition}
  Given a set of periodic (sporadic) tasks, we can always build a
  uniform multiprocessor platform on which the task set is schedulable,
  by providing one processor of speed $U_i$ per each task. This uniform
  multiprocessor platform will have $\mathsf{MaxSpeed}_\pi =
  \max_{i=1}^n \{ U_i \}$ and $\mathsf{TotSpeed}_\pi = \sum_{i=1}^n
  U_i$.
  
  \begin{definition}
    Let $I$ denote any set of jobs, and $\pi$ any uniform multiprocessor
    platform. For any scheduling algorithm A, and time instant $t \geq
    0$, let $W(A, \pi, I, t)$ denote the amount of work done by
    algorithm A on jobs of I over the interval $[0,t)$ while executing
    on $\pi$.
  \end{definition}
  
  The most general form of the test is given by the following Lemma.
  
  \begin{lemma}[Lemma~$1$ in~\cite{goossens2003priority}.]
    \label{lm-uniform-workload}
    Let $\pi$ denote a uniform multiprocessor platform with cumulative
    processor capacity $\mathsf{TotSpeed}_\pi$ and in which the fastest
    processor has speed $\mathsf{MaxSpeed}_\pi < 1$. Let $\pi'$ denote
    an identical multiprocessor platform comprised of $m$ processors of
    unit speed. Let A denote any uniform multiprocessor algorithm, and
    A' any work conserving scheduling algorithm on $\pi$. If the
    following condition is satisfied:
    \begin{equation}
      \label{eq:uniform-basic}
      m > \frac{\mathsf{TotSpeed}_\pi - \mathsf{MaxSpeed}_\pi}{1 - \mathsf{MaxSpeed}_\pi}
    \end{equation}
    then for any collection of jobs $I$ and any time instant $t\geq 0$,
    \begin{equation}
      \label{eq:uniform-workload}
      W(A',\pi',I,t) \geq W(A,\pi,I,t)
    \end{equation}
  \end{lemma}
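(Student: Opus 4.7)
I would argue by contradiction, exploiting the work-conservation of A' together with the speed bound $\mathsf{MaxSpeed}_\pi < 1$ on $\pi$. Assuming the conclusion fails, set $t^\star \coloneqq \inf\{\,t \ge 0 : W(A',\pi',I,t) < W(A,\pi,I,t)\,\}$. Since both cumulative-work functions vanish at $0$ and are continuous in $t$, the two quantities coincide at $t^\star$ while the right-derivative of $W(A,\pi,I,\cdot) - W(A',\pi',I,\cdot)$ at $t^\star$ is strictly positive; equivalently, the instantaneous work rate of A on $\pi$ at $t^\star$ strictly exceeds that of A' on $\pi'$.

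The next step is to bound these two instantaneous rates. Because every processor of $\pi'$ runs at unit speed, the rate of A' at $t^\star$ equals the number $x \in \{0,1,\dots,m\}$ of processors of $\pi'$ that are busy, and work-conservation of A' forces $x = \min\{m,\, n_{A'}(t^\star)\}$, where $n_{A'}(t^\star)$ denotes the number of jobs of $I$ that are active (arrived and not yet completed) under A' at $t^\star$. On $\pi$, each processor has speed at most $\mathsf{MaxSpeed}_\pi$ and A can keep at most $n_A(t^\star)$ processors busy, so A's rate at $t^\star$ is bounded both by $\mathsf{TotSpeed}_\pi$ and by $n_A(t^\star)\cdot\mathsf{MaxSpeed}_\pi$, where $n_A(t^\star)$ is defined analogously.

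Given these bounds, the contradiction is extracted through elementary manipulation of (\ref{eq:uniform-basic}), which rearranges to $\mathsf{TotSpeed}_\pi < m - (m-1)\mathsf{MaxSpeed}_\pi \le m$. In the case $x = m$, A''s rate is $m$ while A's rate is at most $\mathsf{TotSpeed}_\pi < m$, contradicting the strict rate inequality. In the case $x < m$, one has $x = n_{A'}(t^\star)$; I would first establish $n_A(t^\star) \le n_{A'}(t^\star)$ by a per-job comparison of the work devoted by A and A' to each individual job, exploiting the inequality $W(A',\pi',I,t) \ge W(A,\pi,I,t)$ throughout $[0,t^\star)$ together with the equality at $t^\star$. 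Granted this, A's rate is at most $x\cdot\mathsf{MaxSpeed}_\pi < x$ since $\mathsf{MaxSpeed}_\pi < 1$, again contradicting the strict rate inequality.

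The main obstacle I expect is precisely this auxiliary inequality $n_A(t^\star) \le n_{A'}(t^\star)$ used in the case $x < m$: the equality of cumulative works at $t^\star$ alone is insufficient, since two schedules can spread the same total work over different numbers of jobs, so the per-job argument must genuinely exploit A''s lead over A throughout a left-neighbourhood of $t^\star$ to control individual-job progress. Once this bookkeeping step is in place, the remainder of the proof reduces to the two elementary rate comparisons described above.
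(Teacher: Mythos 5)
First, a remark on scope: the paper does not prove this lemma at all --- it is quoted verbatim as Lemma~1 of Goossens, Funk and Baruah \cite{goossens2003priority} (the proof lives in that line of work), so there is no internal proof to compare against; I am judging your proposal against the known argument for the cited result. Your overall framing --- localise to the first violation instant $t^\star$, compare instantaneous work rates, and split on whether all $m$ processors of $\pi'$ are busy --- is reasonable, and the fully-busy case is handled correctly via $\mathsf{TotSpeed}_\pi < m-(m-1)\mathsf{MaxSpeed}_\pi \le m$.

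The genuine gap is exactly the auxiliary claim you flag, $n_A(t^\star)\le n_{A'}(t^\star)$, and I do not believe it can be rescued in the way you suggest. Equality of the \emph{cumulative} works at $t^\star$, together with $W(A',\pi',I,t)\ge W(A,\pi,I,t)$ on $[0,t^\star)$, constrains only the total amount of work, not how it is distributed over jobs, and a fortiori not the \emph{number} of incomplete jobs. A work-conserving A' on the faster aggregate platform will typically have \emph{finished more individual jobs} than A by any given time; every job completed by A' but not by A inflates $n_A$ relative to $n_{A'}$, and the work ledger can still balance at $t^\star$ provided A is ahead on some job that A' was forced to neglect while all $m$ of its processors were busy. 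Nothing in your per-job bookkeeping rules this configuration out, so the counting claim is carrying the entire weight of the theorem while remaining unproved --- and it is at least as hard as the lemma itself, since if it held the whole result would follow in three lines. Tellingly, the published proof never establishes (or needs) any comparison of job counts at a single instant: it performs a global accounting over $[0,t]$, partitioning time according to whether all $m$ processors of $\pi'$ are busy, paying for the fully-busy instants with $m>\mathsf{TotSpeed}_\pi$ and for the remaining instants with the fact that any job still unfinished under A' is then served at unit speed, hence no slower than the at-most-$\mathsf{MaxSpeed}_\pi$ rate A can ever give it. The $(m-1)\mathsf{MaxSpeed}_\pi$ slack in condition (\ref{eq:uniform-basic}) is precisely what pays for the work A can accumulate on jobs that A' neglects while saturated --- a history-dependent effect that an instantaneous rate-and-count comparison at $t^\star$ cannot see. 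To repair the proof you would need to replace the counting step by such an interval-based, per-job charging argument.
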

  
  The following theorem specializes the result of Lemma
  \ref{lm-uniform-workload} for G-EDF.
  \begin{theorem}[Theorem~$3$ in~\cite{goossens2003priority}]
    \label{thm:baruah-goossens-funk}
    Let $\pi$ denote a uniform multiprocessor platform as in Lemma
    \ref{lm-uniform-workload}. Let $i$ denote a set of jobs that is
    feasible in $\pi$. Let $\pi'$ an identical multiprocessor platform
    comprised of $m$ unit-speed processors. If Equation
    (\ref{eq:uniform-basic}) is satisfied, then $I$ will meet all
    deadlines when scheduled using G-EDF on~$\pi'$.
  \end{theorem}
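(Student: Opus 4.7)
The plan is to prove the theorem by contradiction, with Lemma \ref{lm-uniform-workload} as the main analytic tool. Suppose for contradiction that G-EDF on $\pi'$ misses a deadline, and let $J \in I$ be a job that misses its deadline at some time $t_d$. I would then restrict attention to the subset $I' \subseteq I$ of jobs with absolute deadline at most $t_d$ (so $J \in I'$). The point of this restriction is that every job in $I \setminus I'$ has strictly lower EDF priority than every job of $I'$ on $[0, t_d]$, so a priority-monotonicity (exchange) argument shows that the G-EDF schedule of $I$ on $\pi'$ and the G-EDF schedule of $I'$ alone on $\pi'$ agree on the jobs of $I'$; in particular $J$ also misses its deadline in the simpler schedule of $I'$ alone.

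Next, I would exploit the feasibility hypothesis: fix any algorithm $A$ that schedules $I$ on $\pi$ while meeting all deadlines. Since $A$ also meets every deadline of $I' \subseteq I$ by time $t_d$, the work $W(A, \pi, I', t_d)$ equals the total cumulative execution requirement of $I'$. Now apply Lemma \ref{lm-uniform-workload} with this $A$, with $A'$ taken to be G-EDF (which is work-conserving), with job set $I'$, and at time $t = t_d$. The hypothesis that Equation~(\ref{eq:uniform-basic}) holds yields
\[
 W(\text{G-EDF}, \pi', I', t_d) \;\geq\; W(A, \pi, I', t_d),
\]
so G-EDF has already delivered the full execution requirement of $I'$ by time $t_d$. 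Consequently every job of $I'$, including $J$, is complete by $t_d$, contradicting the assumed deadline miss.

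The main obstacle is the priority-monotonicity step used to reduce the schedule of $I$ to the schedule of $I'$ alone. In a preemptive, \emph{global} setting, idle processor slots left by $I'$ may be filled with background work drawn from $I \setminus I'$, and the identity of the $m$ highest-priority ready jobs shifts over time; one must verify rigorously that this background activity cannot decrease the amount of work G-EDF devotes to $I'$ by any instant $\leq t_d$. Once this monotonicity is established, the remainder of the argument is a mechanical invocation of Lemma \ref{lm-uniform-workload} together with the feasibility guarantee on $A$.
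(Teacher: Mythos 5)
The paper does not actually prove this theorem---it is imported verbatim from Goossens, Funk and Baruah---but your argument is precisely the standard proof given in that reference: contradiction at a missed deadline $t_d$, restriction to the job subset $I'$ with deadlines at most $t_d$, and an application of Lemma~\ref{lm-uniform-workload} with $A'=\text{G-EDF}$ and $A$ the feasible schedule on $\pi$ (restricted to $I'$, which remains feasible and completes all of $I'$ by $t_d$). The priority-monotonicity step you flag as the main obstacle is not actually a gap: every job of $I\setminus I'$ has deadline strictly greater than $t_d$ and hence strictly lower EDF priority than every job of $I'$, so a routine induction over scheduling events shows that the set of $I'$-jobs selected by global preemptive EDF at each instant is unaffected by removing $I\setminus I'$ (lower-priority jobs only occupy processors left idle by $I'$), and the rest of your argument goes through as written.
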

}

In practice, according to GFB, a set of  periodic or sporadic tasks
is schedulable by G-EDF if:
\begin{equation}
  \label{eq:uniform-umax}
  U \leq m - (m - 1) U_{\mathsf{max}}
\end{equation}
where  $U_{\mathsf{max}} = \max_i \{ U_i \}$.

The maximum utilisation we can achieve depends on the maximum
utilisation of all tasks in the system: the largest is
$U_{\mathsf{max}}$, the lower is the total achievable
utilisation. This test is only sufficient: if Equation
(\ref{eq:uniform-umax}) is not verified, the task set can still be
schedulable by G-EDF.

The authors of \cite{goossens2003priority} also proposed to give
higher priority to largest utilisation tasks. In this paper we will
not consider these further enhancements.

The BCL test was developed for sporadic tasks, and here we adapt the
notations to server context. We focus on the schedulability of a
target server $S_k$; particularly, we choose one arbitrary server job
of $S_k$.  Execution of the target server job may be interfered by
jobs from other servers with earlier absolute deadlines. The
\emph{interference} over the target server job by an interfering
server $S_i$, within a time interval, is the cumulative length of
sub-intervals such that the target server is in \activeContending\ but
cannot execute, while $S_i$ is running.

%
A \emph{problem window} is the time interval that starts with the
target server job's arrival time and ends with the target server job's
deadline.  As a result, the interference from an interfering server
$S_i$ is upper bounded by its \emph{workload}, which is the cumulative
length of execution that $S_i$ conducts within the problem window.
Let us denote the worst-case workload of a server $S_i$ in the problem
window as $\hat{W}_{i,k}$.

\reportOnly{
  The worst-case workload of a server $S_i$ in the problem window is
  realised in the following scenario (which is further depicted in Figure \ref{fig:worst-ci}) :
  \begin{itemize}
  \item Absolute deadline of the last server job of $S_i$
    in the problem window is coincident with $b$.
  \item All server jobs of $S_i$ are released as soon as possible with
    the minimum activation interval $P_i$.
  \item The \emph{carry-in} server job (i.e., the first job in the
    problem window) executes as late as possible and finishes exactly at
    its absolute deadline.
  \end{itemize}
  By considering this worst-case scenario, an upper bound for the
  workload of $S_i$ within the problem window is:
  \begin{equation}
    \label{eq:upper-bound}
    \hat{W}_{i,k} = \left\lfloor \frac{P_k}{P_i} \right\rfloor\cdot Q_i + \min\{Q_i, P_k\mbox{ mod } P_i\}.
  \end{equation}
}

The formulation of the workload used in this paper is the same as the
one proposed in \cite{bertogna2005improved}.  In order not to
compromise the schedulability when reclaiming CPU time
\reportOnly{(see Section
  \ref{sec:proof-correctness})}\paperOnly{(see~\cite{tech-report-mgrub})},
we need to add one additional term to this upper bound to take into
account the interference caused by the reclaimed bandwidth by servers
that may be activated aperiodically.

Thus, in this paper the workload upper bound is defined as follows:
\begin{equation}
  \label{eq:upper-bound-recl}
  \hat{W}_{i,k} = \left\lfloor \frac{P_k}{P_i} \right\rfloor Q_i + \min\{Q_i, \Delta\} + \max\{\Delta-Q_i, 0\} U_i
\end{equation}
where $\Delta=(P_k\mbox{ mod } P_i)$. 

On the other hand, when $S_i$ and $S_k$ execute in parallel on
different processors at the same time, $S_i$ does not impose
interference on $S_k$. Thus, in case $S_k$ is schedulable, the
interference upon the target job by $S_i$ cannot exceed $(P_k-Q_k)$.

In the end, according to the formulation of BCL used in this paper a
task set is schedulable if one of the following two conditions is true:
\begin{equation}
  \begin{aligned}
    a) \quad \sum_{i\neq k} \min\{\hat{W}_{i,k}, P_k-Q_k\} < m(P_k-Q_k) &\\
    b) \quad \sum_{i\neq k} \min\{\hat{W}_{i,k}, P_k-Q_k\} = m(P_k-Q_k) & \\
              \wedge \neg\exists h\neq k,  \hat{W}_{h,k} \leq P_k-Q_k &
  \end{aligned}
  \label{eq:bcl-admission}
\end{equation}

\reportOnly{
  \begin{figure}
    \centering  
    \begin{tikzpicture}[scale=0.275,transform shape]
      \tikzstyle{every node}=[font=\Huge]
      \draw [-latex](-0.5,0) coordinate(dd)-- (0,0) coordinate (O1) -- (30,0)coordinate(ff) node[above]{};
      \draw [thick] (dd) -- (6,0) coordinate (l1) -- (6,0) coordinate (l2) -- (9,0) coordinate (l3) -- (18,0) coordinate (l4) -- (ff);
      
      
      
      \begin{scope}[shift={(l1)}]
        \node[ above right,right,draw, minimum width=3cm,minimum height=2.cm,fill=blue,yshift=1cm](n3a) {};
      \end{scope}
      \begin{scope}[shift={(l3)}]
        \node[ yshift=1cm,right,draw, minimum width=3cm,minimum height=2.cm,fill=blue](n3a) {};
      \end{scope}
      \begin{scope}[shift={(l4)}]
        \node[ yshift=1cm,right,draw, minimum width=3cm,minimum height=2.cm,fill=blue](n3a) {};
      \end{scope}
      \draw[-latex,thick] (O1) -- ++(0,1*3)node[right]{};
      \draw[-latex,thick] (9,0) -- (9,1*3)node[right]{};
      \draw[-latex,thick] (18,0) -- (18,1*3)node[right]{};
      
      \draw[-latex,dashed] (18,2.2)--(27,2.2)node[above]{};
      \draw[-latex,dashed] (27,2.2)--(18,2.2)node[above]{};
      \draw[dashed, thick] (22.5,2.2) node[above]{$P_i$ };
      
      \draw[-latex,dashed] (6,3.2)--(27,3.2)node[above]{};
      \draw[-latex,dashed] (27,3.2)--(6,3.2)node[above]{};
      \draw[dashed, thick] (16.5,3.2) node[above]{$P_k$ };
      

      
      \foreach \xx in{6}{
        \draw[dashed, thick] (\xx,3.) -- (\xx,-0.0) node[below]{$\mathbf{a}$ };
      }
      \foreach \xx in{27}{
        \draw[dashed, thick] (\xx,3.) -- (\xx,-0.0) node[below]{$\mathbf{b}$ };
      }
      
      {
        
        
        
      }
    \end{tikzpicture}
    \caption{Maximum workload by $S_i$ within the problem window}
    \label{fig:worst-ci}
  \end{figure}
}

Between the two tests presented so far (GFB and BCL) no one dominates
the other: there are task sets that are schedulable by GFB but not by
BCL, and vice versa. In general terms, BCL is more useful when a task
has a large utilisation, whereas GFB is more useful for a task set
with many small tasks.


\section{Reclaiming rules}
\label{sec:reclaiming-rule}

In this section we propose two new reclaiming rules for
\mbox{G-EDF}. The first one, that we call \emph{parallel reclaiming}
equally divides the reclaimed bandwidth among all executing
servers. The second one, that we call \emph{sequential reclaiming}
assigns the bandwidth reclaimed from one server to one specific
processor.

Note that, when bandwidth reclaiming is allowed, served jobs within a
server may run for more than the server's budget, as the bandwidth
from other servers may be exploited.
\paperOnly{Due to space constraints, the proofs of correctness are not
  reported here. They can be found in \cite{tech-report-mgrub}.}

\subsection{Parallel reclaiming}
\label{sec:parallel-recl}

In parallel reclaiming, we define one global variable
$U_{\mathsf{inact}}$, initialized to 0, that contains the total
amount of bandwidth in the system that can be reclaimed.  The rules
corresponding to transitions $1$ and $5$ \reportOnly{in Figure
  \ref{fig:dia-mgrub}} are modified as follows.

\begin{description}
\item[5.] A server remains in \activeNonContending\ only if $q_i <
  (d_i - t) U_i$. When $q_i \geq (d_i - t)U_i$ the server moves to
  \inactive. Correspondingly, variable $U_{\mathsf{inact}}$ is
  incremented by $U_i$.
\item[1.] When a job of a task arrives, if the corresponding server is
  \inactive, it goes to \activeContending\ and its budget and deadline
  are modified as in the original rule. Correspondingly,
  $U_{\mathsf{inact}}$ is decremented by $U_i$.
\end{description}

While a server $S_i$ executes on processor $p$, its budget is updated
as follows:

\begin{equation}\label{eq:mgrub-rule-parallel}
  \d{q_i} = -\max\left\{ U_i, 1 - \frac{U_{\mathsf{inact}}}{m} \right\} \d{t}.
\end{equation}

This rule is only valid for the GFB test. That is, if a set of servers
are schedulable by GFB without bandwidth reclaiming, it is still
schedulable when parallel reclaiming is allowed.

\paragraph{Initialization of $U_{inact}$}
While it is safe to initialise $U_{\mathsf{inact}}$ to be 0, we would
like to take advantage of the initial free bandwidth in the system.
Therefore, we initialise $U_{\mathsf{inact}}$ to the maximum initial
value that can be reclaimed without jeopardizing the existing servers.
From Equation (\ref{eq:uniform-umax}), we have:
\begin{equation*}
  U + U_{\mathsf{inact}}  \leq m - (m-1)U_{\mathsf{max}}.
\end{equation*}
That is,
\begin{equation}
  \label{eq:parallel-init}
  U_{\mathsf{inact}}  = m - (m-1)U_{\mathsf{max}} - U.
\end{equation}
This is equivalent to having one or more servers, whose cumulative
bandwidth is $U_{\mathsf{inact}}$ that are always inactive.

\subsection{Sequential reclaiming}
\label{sec:sequential-recl}

In sequential reclaiming, we define an array of variables
$U_{\mathsf{inact}}[]$, one per each processor. The variable
corresponding to processor $p$ is denoted by $U_{\mathsf{inact}}[p]$.
More specifically, $U_{\mathsf{inact}}[p]$ is the reclaimable
bandwidth from inactive servers that complete their executions in
processor $p$, and $U_{\mathsf{inact}}[p]$ can only be used by a
server running on $p$.  For any $p$, $U_{\mathsf{inact}}[p]$ can be
safely initialised to be 0. Then, we modify the rules corresponding to
transitions $1$ and $5$ \reportOnly{in Figure \ref{fig:dia-mgrub}} as
follows.

\begin{description}
\item[5.] A server remains in \activeNonContending\ only if $q_i <
  (d_i - t) U_i$. When $q_i \geq (d_i - t)U_i$ the server moves to
  \inactive. Correspondingly, one of the variables
  $U_{\mathsf{inact}}[p]$ is incremented by $U_i$. The server
  remembers the processor where its utilisation has been stored, so
  that it can recuperate it later on.

\item[1.] When a job of a task arrives, if the corresponding server is
  \inactive, it goes in \activeContending\ and its
  budget and deadline are modified as in the original
  rule. Correspondingly, $U_{\mathsf{inact}}[p]$ (where $p$ is the
  processor where the utilisation was stored before) is decremented by~$U_i$.
\end{description}

While a server $S_i$ executes, its budget is updated
as follows:

\begin{equation}\label{eq:mgrub-rule-sequential}
   \d{q_i} = -\max \left\{U_i, 1 - U_{inact}[p] \right\} \d{t}.
\end{equation}

Notice that, for the moment, we do not explore more sophisticated
methods for updating $U_{\mathsf{inact}}[p]$ when a server becomes
inactive. In fact, there are several possible choices: for example, we
could use a Best-Fit algorithm to concentrate all reclaiming in the
smallest number of processors, or Worst-Fit to distribute as much as
it is possible the reclaimed bandwidth across all processors. In the
current implementation, for simplicity we chose to update the variable
$U_{\mathsf{inact}}[p]$ corresponding to the processor where the task
has been suspended.

This rule works for both the GFB test of Equation
(\ref{eq:uniform-umax}) and and for the modified BCL test of Equation
(\ref{eq:bcl-admission}).

\paragraph{Initialization of $U_{inact}[]$}
Similarly to the parallel reclaiming case, we would like to initialise
each $U_{inact}[p]$ to be an as large as possible value so to reclaim
the unused bandwidth in the system.  Let us denote this value as
$U_x$.

In case the GFB test is used, the maximum free bandwidth is computed
as in Equation \eqref{eq:parallel-init}. Then, to keep the set of
servers still schedulable w.r.t. GFB, we can initialise each variable
to a value $U_x' = \frac{m - (m-1)U_{max} - U}{m}$.

When it comes to the BCL test, we can think of adding $m$ servers to
the system, each one with infinitesimal period and bandwidth equal to
$U_{x}''$.  To allow each server to use free bandwidth as much as
possible while still guaranteeing the schedulability, the following
condition should hold.

\begin{align*}
  \forall k, & \sum_{i \neq k} min(\hat{W}_{i,k}, P_k - Q_k) + m U_x'' P_k
  < m (P_k - Q_k)  \\ 
  U_x'' & < \min_k \left\{ \frac{P_k - Q_k}{P_k} - \frac{\sum_{i \neq k} min(\hat{W}_{i,k}, P_k - Q_k)}{mP_k} \right\}
\end{align*}

Finally we take the maximum between $U_x'$ and $U_x''$, since only one
of the two test needs to be verified.
\begin{equation}
  \label{eq:extra-sequential}
  U_{inact}[p] = \max \left\{U_x', U_x'' \right\}.
\end{equation}


\reportOnly{\section{Proof of correctness}
\label{sec:proof-correctness}

In this section, we prove that the parallel (sequential) reclaiming
algorithm has the temporal isolation property w.r.t. GFB (BCL),
respectively.

When doing reclaiming, we use a different decomposition in server
jobs. Following the technique used in \cite{lipari2000resource} and in
\cite{pellizzoni2008m}, we suppose that every inactive server
continues to generate small server jobs. In the following, we call
these jobs \emph{micro-jobs}: their execution time is used by the
other active servers for free without consuming their respective
budgets.

It is important to stress that, in the description of the reclaiming
rules, these micro-jobs are not necessary, nor they are needed in the
implementation. They are useful only for the proof of correctness.

\subsection{Proof of parallel reclaiming}
\label{sec:proof-para-recl}

Before proceeding to enunciate the temporal isolation theorem, we will
give some preliminary definition and lemmas. 

We start by doing the equivalence between a server $S_i$ with
bandwidth $U_i$ and a dedicated uniform processor of speed $U_i$. 

\begin{definition}
  The virtual time $V_i(t)$ of a server $S_i$ with bandwidth $U_i$ is
  the time at which a dedicated uniform processor of speed $U_i$ has
  executed the same amount of work that the server has executed by
  time $t$ in the shared processor.
\end{definition}

There are several ways to represent the virtual time because when the
server is idle there are several time instant where the uniform
processor has completed its work as well. We chose the following definition:
\begin{itemize}
\item The virtual time is always equal to $t$ while the server is
  inactive, $V(t) = t$.
\item It progresses at a speed $1/U_i$ while the server executes
  $dV_i(t) = dt / U_i$.
\item It does not change while the server is active but not executing.
\end{itemize}
This definition of virtual time can also be formalised as follows:
\begin{equation}
  \label{eq:vtime}
  V_i(t) = \left \{ \begin{aligned}  
      d_i - q_i/U_i   &   \text{if} \;\;S_i \;\;\text{is \activeContending\ or \activeNonContending} \\
      t              &   \text{if} \;\;S_i \;\;\text{is \inactive} 
      \end{aligned} \right .
\end{equation}

In Figure \ref{fig:vtime-evolution} we show the evolution of the
virtual time in the case of one server with parameters $Q_i=2, P_i=4$,
serving a periodic task $\tau_i$ with 4 jobs that sometimes execute
for 1 units of computation, and sometimes for 2 units of
computation. Other examples can be found in
\cite{conf/ecrts/LipariB00} and \cite{lipari2000resource}.

Hence, to execute $Q_i$ units of budget, the uniform processor needs
$P_i$ units of time. However, the server is subject to scheduling, so
it may execute or not, whereas the uniform processor is dedicated, so
it is always executing as long as there are unfinished jobs. If the
task executes less than $Q_i$ every period, there is some slack time
that we may recuperate. This corresponds to the coloured areas in
Figure \ref{fig:vtime-evolution} where the virtual time progresses at
the same speed as the time.

\begin{figure}
  \centering
  \includegraphics[width=10cm]{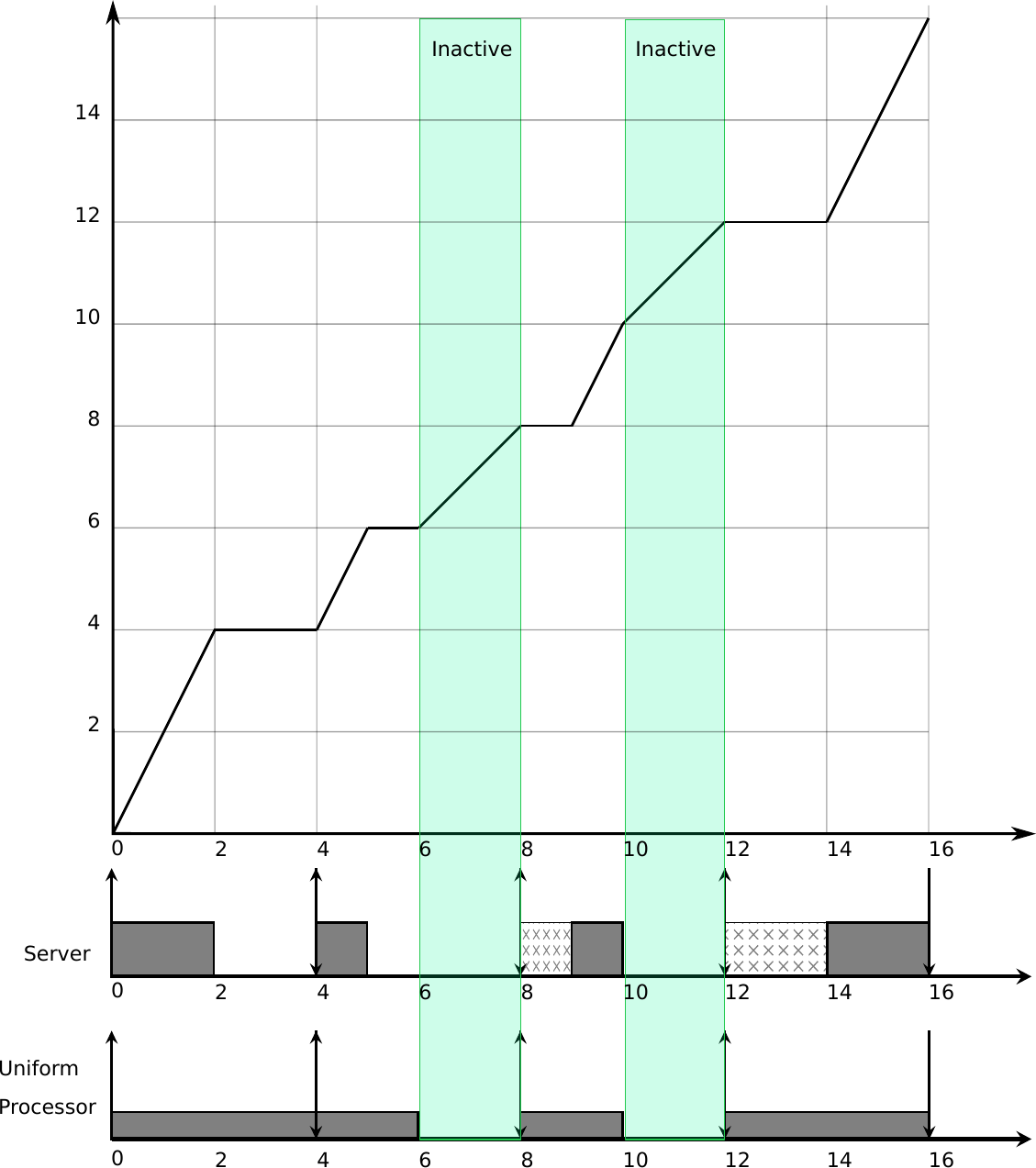}
  \caption{Virtual-time evolution compared to cumulative execution
    time for an example task scheduled by a server with $Q_i = 2$ and
    $P_i=4$.}
  \label{fig:vtime-evolution}
\end{figure}

The equivalence between server and uniform processor can be expressed
in terms of workload.
\begin{lemma}
  \label{lm:vtime-workload}
  Let $I_i$ be the set of jobs generated by server $S_i$; let $\pi'$
  be the shared identical multiprocessor platform, and let $\pi_i$ the
  uniform processor of speed $U_i$. The following relation holds:
  \begin{equation}\label{eq:lemma-workload}
    \forall t, W(EDF, \pi', I_i, t) = W(A, \pi_i, I_i, V_i(t))
  \end{equation}
\end{lemma}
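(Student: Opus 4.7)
The plan is to prove Equation~(\ref{eq:lemma-workload}) by partitioning time into the maximal intervals between successive events of $S_i$ (job arrivals, job completions, budget exhaustions, deadline recharges) and showing that, on each such interval, both sides evolve at the same rate, while every event preserves the equality.

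First I would interpret both sides as cumulative execution time of $S_i$'s jobs. The left-hand side $W(EDF,\pi',I_i,t)$ is the total real time during which the G-EDF scheduler runs some job of $I_i$ on $\pi'$. The right-hand side $W(A,\pi_i,I_i,V_i(t))$ equals $U_i$ times the amount of virtual time during which the (work-conserving) algorithm $A$ keeps $\pi_i$ busy, once the jobs of $I_i$ are fed to $\pi_i$ at the virtual arrival times $V_i(a_{i,k})$. Hence it suffices to show that $\pi_i$ is busy for exactly $W(EDF,\pi',I_i,t)/U_i$ virtual time units by the virtual instant $V_i(t)$.

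Next I would do a case analysis on the state of $S_i$ in each interval, using Equation~(\ref{eq:vtime}). In \activeContending\ while executing, $q_i$ decreases at rate $1$ so $V_i$ grows at rate $1/U_i$; the LHS grows at rate $1$ and $\pi_i$, busy on the corresponding job, produces work at rate $U_i$ per virtual time unit, so both sides gain $d\tau$. In \activeContending\ while preempted, in \activeNonContending, and in \recharging, both $d_i$ and $q_i$ are stationary, so $V_i$ is stationary and neither side progresses. In \inactive\ no job of $I_i$ is pending on either platform, so neither side grows even though $V_i$ tracks $t$. At every event --- \inactive$\to$\activeContending, \activeNonContending$\to$\inactive at $t=d_i-q_i/U_i$, budget exhaustion, and the recharge at $t=d_i$ --- a short calculation from the update rules shows $V_i$ is continuous, so no spurious jump is introduced. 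Integrating the matched rates across all intervals yields Equation~(\ref{eq:lemma-workload}).

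The main obstacle I anticipate is the \recharging\ state together with the recharge event: on the shared platform the current job is blocked although it has work left, whereas a naive work-conserving algorithm on $\pi_i$ would keep running it. The resolution is that $V_i$ is frozen at $d_i$ throughout recharging, so $V_i(t)$ does not advance beyond the value reached at budget exhaustion, and therefore $W(A,\pi_i,I_i,V_i(t))$ does not grow either; when the recharge fires at $t=d_i$ the simultaneous updates $d_i\leftarrow d_i+P_i$ and $q_i\leftarrow U_iP_i$ leave $V_i$ unchanged, and $\pi_i$ resumes the outstanding job exactly where it left off in virtual time. Formalising this bookkeeping --- in particular, that the jobs on $\pi_i$ effectively arrive at virtual times $V_i(a_{i,k})$ so that the processor never becomes busy ahead of its shared-platform counterpart --- is the only delicate piece of the argument.
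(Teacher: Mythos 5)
Your proposal is correct and follows the same route the paper intends: the paper's proof simply states that the claim ``follows directly from the definition of virtual time and the description of the server states,'' and your interval-by-interval case analysis (matched rates in each state, continuity of $V_i$ at each transition, the \recharging\ subtlety handled by $V_i$ being frozen at $d_i$) is exactly the detailed working-out of that assertion.
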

\begin{proof}
  It follows directly from the definition of virtual time and the
  description of the server states in Section \ref{sec:cbs-grub}.
\end{proof}
\begin{lemma}
  When a server becomes \inactive, all jobs have completed both in
  $\pi'$ and in $\pi$.
\end{lemma}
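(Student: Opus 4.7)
The plan is to read off the first half of the claim directly from the state definitions, and then transport it across the virtual-time equivalence to get the second half.

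First I would observe that the only way a server enters \inactive\ under rule~5 is from \activeNonContending. By the definition of \activeNonContending\ in Section~\ref{sec:cbs-grub}, all jobs of the served task have already completed on the shared platform $\pi'$ by the time of the transition. This immediately gives the ``completed in $\pi'$'' half of the statement with no further work.

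For the ``completed in $\pi_i$'' half, I would use Lemma~\ref{lm:vtime-workload} to transfer work from $\pi'$ to $\pi_i$. Let $t$ be the instant of the transition to \inactive. Let $W_{\mathrm{tot}}(t)$ denote the sum of the computation times of all jobs of $\tau_i$ released by time $t$; since all such jobs have completed on $\pi'$, we have $W(EDF,\pi',I_i,t)=W_{\mathrm{tot}}(t)$. Lemma~\ref{lm:vtime-workload} then yields
\begin{equation*}
  W(A,\pi_i,I_i,V_i(t)) \;=\; W(EDF,\pi',I_i,t) \;=\; W_{\mathrm{tot}}(t).
\end{equation*}
Since $\pi_i$ is a dedicated uniform processor that serves the jobs of $I_i$ one at a time in release order, its cumulative work at any instant equals the total computation of the jobs it has already finished plus the partial execution of the one currently in service. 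The equality above therefore forces every released job to be completed on $\pi_i$ by virtual time $V_i(t)$.

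It remains to check that the relevant virtual time is well-defined and consistent with the transition condition. Just before the transition the server is \activeNonContending, so by~\eqref{eq:vtime} we have $V_i(t^-)=d_i-q_i/U_i$; the rule~5 guard $q_i\ge(d_i-t)U_i$ rewrites as $V_i(t^-)\le t$, and after the transition $V_i(t^+)=t$, so the virtual time only jumps forward. Either value of $V_i(t)$ is at least as large as the virtual completion time of every released job, which closes the argument. The main subtlety I expect is this last point: making precise that equal cumulative work on $\pi_i$ actually implies completion of \emph{all} jobs, which I would pin down by appealing to the fact that $\pi_i$ serves $I_i$ sequentially and work-conservingly, so a job cannot remain pending once the cumulative executed work equals the cumulative released work.
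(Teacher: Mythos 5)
Your proposal is correct and follows the same route as the paper, which proves this lemma simply by noting that it follows directly from Lemma~\ref{lm:vtime-workload}. Your write-up just makes explicit the two steps the paper leaves implicit (completion on $\pi'$ from the \activeNonContending\ definition, then transfer to $\pi_i$ via the workload equality), so no further changes are needed.
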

\begin{proof}
  It follows directly from Lemma \ref{lm:vtime-workload}.
\end{proof}

Therefore, $\pi_i$ remains idle until the time when a new job of task
$\tau_i$ is activated. Let $t_1$ denote an instant in which the server
(and the corresponding uniform processor) becomes \inactive, and let
$t_2$ the subsequent instant when $S_i$ becomes active again. Since
the server is idle, it does not consume any bandwidth so we can
reclaim it and donate to another server. However, at time $t_1$ we do
not know yet when in the future the next job will be activated, so we
have to be careful in donating such extra bandwidth.

We divide the interval $[t_1, t_2]$ in a sequence of sub-intervals of 
infinitesimally small length 
$\delta$: $\{[t_1, t_1+\delta), [t_1+\delta, t_1 + 2\delta),
[t_1+2\delta, t_1 + 3\delta), \ldots, [t_2 - \delta, t_2]\}$. As time passes, 
we donate the infinitesimal $\delta U_i$ to the needing server. In this way, 
we can stop as soon as the new job is activated without consuming its 
reserved budget. To model this donation, we introduce the concept of 
\emph{micro-jobs}.  

Each such sub-interval represents the arrival time and the absolute
deadline of a set of \emph{micro-jobs} produced by the inactive
server. More specifically, for sub-interval $[t, t+\delta)$ we produce
$m$ micro-jobs, each one with arrival time at $t$ and absolute
deadline at $t+\delta$, and execution time equal to $\frac{U_i
  \delta}{m}$ so that their cumulative execution time is $U_i \delta$.

It is easy to see that, if we introduce such micro-jobs on the
corresponding dedicated uniform processor, they will all complete
before their deadline. Also, the shape of the virtual time will not
change. To better understand how it works, consider again the example
in Figure \ref{fig:vtime-evolution}: in the coloured intervals the
server is inactive, and the dedicated processor is idle. In these
intervals, we introduce a sequence of micro-jobs, whose cumulative
utilisation is equal to $U_i$. Clearly, since the uniform processor is
idle, it could run all these micro-jobs, and each one would complete
before the deadline. Therefore, the workload executed by $\pi_i$ by
time $t$, including the normal server jobs and all the micro-jobs is
always exactly equal to $U_i t$. In other words, the micro-jobs are
used to \emph{fill the idle times} that are produced on the
\emph{uniform processor} by the fact that a job terminates earlier
than expected. We can repeat this reasoning for every \inactive\
server (uniform processor, respectively). We have at each instant $m$
micro-jobs for each inactive server, and the cumulative execution time
of all the micro-jobs produced by all servers is $\delta
U_{\mathsf{inact}}$.

Under G-EDF, these micro-jobs have maximum priority, since their
relative deadlines are of infinitesimal length. We have at most $m$
micro-jobs active at every instant, so they execute on every processor
for a length of $\delta U_{\mathsf{inact}}$, then they suspend
themselves until the next activation at time $t+\delta$. Thus, in an
interval of length $\delta$, the $m$ earliest deadline ``normal
servers'' will execute consuming their budget for only $(1 -
U_{\mathsf{inact}}/m)\delta$, whereas they can execute without
consuming their budget (which is accounted for by the micro-jobs) for
the remaining $\delta U_{\mathsf{inact}}/m$.

With these observation in mind, we can now proceed to prove the main
theorem.

\begin{theorem}\label{thm:parall-recl-proof}
  If Equation (\ref{eq:uniform-umax}) holds, then all jobs produced by
  all servers under parallel reclaiming will meet their scheduling
  deadlines.
\end{theorem}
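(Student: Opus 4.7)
The plan is to reduce the schedulability of G-EDF with parallel reclaiming on the identical $m$-processor platform $\pi'$ to a feasibility problem on a uniform multiprocessor platform $\pi$, invoking Lemma~\ref{lm-uniform-workload} and using the micro-jobs construction described above to absorb the bandwidth donated by \inactive\ servers.

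Concretely, I would first build $\pi$ by associating to each server $S_i$ a dedicated processor $\pi_i$ of speed $U_i$, so that $\mathsf{TotSpeed}_\pi = U$ and $\mathsf{MaxSpeed}_\pi = U_{\max}$; a one-line rearrangement of Equation~(\ref{eq:uniform-umax}) then yields $m \geq (U - U_{\max})/(1 - U_{\max})$, which is exactly the hypothesis of Lemma~\ref{lm-uniform-workload}. Next, I would form the augmented instance $I^{\star}$ by appending to the set of normal server jobs the micro-jobs produced by every $S_i$ during its \inactive\ phases. By construction, the cumulative demand of the normal jobs of $S_i$ plus its micro-jobs in any interval $[0,t)$ is bounded by $U_i\, t$, so the partitioned assignment that sends each server's stream to its own $\pi_i$ produces a feasible schedule of $I^{\star}$ on $\pi$ meeting all deadlines. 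Applying Lemma~\ref{lm-uniform-workload} to this partitioned schedule on $\pi$ and to G-EDF on $\pi'$ gives $W(\text{G-EDF},\pi',I^{\star},t) \geq W(A,\pi,I^{\star},t)$ for every $t$, which forces every job of $I^{\star}$, and in particular every normal server job, to complete by its scheduling deadline under G-EDF on $\pi'$.

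The delicate point, and the main obstacle, is to justify that the real execution of G-EDF together with the parallel reclaiming rule of Equation~(\ref{eq:mgrub-rule-parallel}) reproduces, when restricted to the normal jobs, the G-EDF schedule of $I^{\star}$ on $\pi'$. The idea is that in any sub-interval of length $\delta$ the cumulative micro-job demand is $U_{\mathsf{inact}}\,\delta$ and the micro-jobs carry infinitesimal deadlines, so under G-EDF they would preempt the normal \activeContending\ servers on each of the $m$ processors for an aggregate time $U_{\mathsf{inact}}\,\delta / m$; equivalently, each active server sees its processor running at effective rate $1 - U_{\mathsf{inact}}/m$, which is precisely the budget-depletion rate of Equation~(\ref{eq:mgrub-rule-parallel}). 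The $\max$ with $U_i$ in that rule handles the corner case in which this rate would fall below $U_i$, ensuring that the virtual time of Equation~(\ref{eq:vtime}) never lags behind the progress of $\pi_i$ and that the bound $U_i\, t$ used in the partitioned feasibility step is preserved. Once this equivalence is established, the theorem follows directly from the application of Lemma~\ref{lm-uniform-workload} sketched above.
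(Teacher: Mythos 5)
Your proposal follows essentially the same route as the paper's own proof: a uniform platform with one speed-$U_i$ processor per server, the micro-job construction to absorb the bandwidth of \inactive\ servers while keeping each $\pi_i$'s workload at $U_i t$, an appeal to Lemma~\ref{lm-uniform-workload} (together with Theorem~\ref{thm:baruah-goossens-funk}) to transfer feasibility to G-EDF on $\pi'$, and the observation that the $m$ highest-priority micro-jobs consume $\delta U_{\mathsf{inact}}/m$ per processor, which is exactly the depletion rate of Equation~(\ref{eq:mgrub-rule-parallel}) with the $\max$ guarding the rate from below by $U_i$. If anything, you are more explicit than the paper about the final equivalence between the idealized micro-job schedule and the actual reclaiming rule, which is the step the paper treats most tersely.
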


\begin{proof}  
  Equation (\ref{eq:uniform-umax}) is still respected after adding the
  micro-jobs (the utilisation of the servers has not changed),
  therefore Lemma \ref{lm-uniform-workload} and Theorem
  \ref{thm:baruah-goossens-funk} are still valid and the new
  collection of jobs, including the micro-jobs, is schedulable by EDF
  on the identical multiprocessor platform.
  
  Observe that we can take any small $\delta > 0$; hence the $m$
  micro-jobs that are active at time $t$ have the highest priority in
  the identical multiprocessor platform: they execute immediately
  after their activation for an interval of time equal to $\delta
  U_{\mathsf{inact}} / m$. Therefore, the executing servers at
  time~$t$ can execute for $\delta U_i / m$ without consuming their
  budget, and for $\delta (1 - U_j / m)$ consuming their budget. In
  any case, a server must always consume all its budget within a
  period~$P_i$, so the budget decreasing rate of a server cannot be
  less than $U_i$.  Therefore, the budget of the servers can be
  updated in Equation (\ref{eq:mgrub-rule-parallel}), without
  compromising the schedulability of the servers.
\end{proof}

Unfortunately, parallel reclaiming does not work for the BCL test. The
reason is that, by dividing all spare bandwidth in parallel pieces,
the interference caused to a server may increase too much. 

%
%

\subsection{Proof for sequential reclaiming}
\label{sec:proof-sequ-recl}

The proof of correctness of the sequential reclaiming rule for the GFB
test is similar to the proof presented in the previous section: it
suffices to create one single micro-job for each reclaiming server. In
particular, every \inactive\ server in interval $[t, t+\delta]$
generates one single micro-job with budget $\delta U_i$. Under G-EDF
this micro-job will execute with the highest priority on any
processor, so we fix it on one specific processor. The proof follows
in the same way as the proof of Theorem \ref{thm:parall-recl-proof}.

Before proceeding with the proof for BCL, we need to discuss the
maximum workload $\hat{W}_{i,k}$ for a server $S_i$ in the problem
window. Moreover, we would like to differentiate two cases.
\begin{enumerate}
\item Periodic servers that are always periodically activated by its
  served tasks. 
\item Servers that may be activated aperiodically.
\end{enumerate}

For case (1), it is easy to see that, even with bandwidth reclaiming,
the worst-case workload of a server $S_i$ in the problem window is
realised as in the scenario depicted in Figure \ref{fig:worst-ci}.
This can be checked such that by violating any condition as in Figure
\ref{fig:worst-ci}, the workload will not increase. The scenario is the following:
\begin{itemize}
\item Absolute deadline of the last job of $S_i$ in the problem window
  is coincident with $b$.
  \item All server jobs of $S_i$ are periodically released according to the server period $P_i$.
  \item The \emph{carry-in} server job (i.e., the first job in the problem window) executes
    as late as possible and finishes exactly at its absolute deadline.
\end{itemize}

However, when it comes to case (2) such that servers in the system
admit aperiodic activation, the scenario in Figure \ref{fig:worst-ci}
may not correspond to the maximum workload.  We provide here a sketch
of the reasoning.

Let define $\Delta=(P_k\mbox{ mod } P_i)$ and suppose $\Delta > Q_i$.
This means that we can anticipate (move to the left) the chain of
arrivals in Figure \ref{fig:worst-ci} as much as $(\Delta-Q_i)$ time
units without reducing the workload. Moreover, if a server allows
aperiodic activation, then from time $b-(\Delta-Q_i)$ to time $b$,
server $S_i$ will generate the same micro-jobs as in the proof of
Theorem \ref{thm:parall-recl-proof}, which can at most contribute
$(\Delta-Q_i) U_i$ to the workload . In the end, for aperiodic servers
with bandwidth reclaiming we formulate $S_i$'s workload by Equation
\eqref{eq:upper-bound-recl}.  More specifically, comparing Equation
\eqref{eq:upper-bound} and \eqref{eq:upper-bound-recl}, the additional
term $(\Delta-Q_i)$ can be regarded as the maximum delay a server job
of $S_i$ within the problem window can experience such that $S_i$'s
workload keeps monotonically increasing.

We now prove that, by admitting a server using the test of Equation
\eqref{eq:bcl-admission} and then performing sequential reclaiming,
no server job misses its deadline.

\begin{theorem}\label{thm:seq-recl-proof}
  Equation (\ref{eq:bcl-admission}) still holds for a set of servers
  with sequential reclaiming.
\end{theorem}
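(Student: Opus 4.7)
The plan is to mirror the structure of the proof of Theorem~\ref{thm:parall-recl-proof}: first absorb the reclaiming into a standard G-EDF schedule by replacing the donated bandwidth with equivalent \emph{micro-jobs}, then apply the BCL interference argument using the enlarged workload bound of Equation~(\ref{eq:upper-bound-recl}).

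First, I would decompose the inactivity interval $[t_1, t_2]$ of every \inactive\ server $S_i$ into sub-intervals of infinitesimal length $\delta$, producing in each sub-interval $[t, t+\delta)$ a single micro-job with release time $t$, deadline $t+\delta$, and execution time $\delta U_i$. Crucially — and this is what distinguishes sequential from parallel reclaiming — the micro-job is pinned to the processor $p$ on which $U_{\mathsf{inact}}[p]$ holds that bandwidth. Because these micro-jobs have infinitesimal relative deadlines they are dispatched immediately under G-EDF, so on processor $p$ the currently executing "normal" server consumes its budget at rate $1-U_{\mathsf{inact}}[p]$ and executes "for free" at rate $U_{\mathsf{inact}}[p]$, which is exactly Equation~(\ref{eq:mgrub-rule-sequential}) after clamping below by $U_i$. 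Hence the augmented system (real server jobs plus micro-jobs, scheduled by plain G-EDF) reproduces the schedule of sequential reclaiming.

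Second, I would fix an arbitrary target server job of $S_k$ and its problem window $[a, a+P_k)$. The interference on $S_k$ caused by any other server $S_i$ is then bounded by the cumulative workload of $S_i$ in this window, counting both regular jobs and micro-jobs. For a periodic chain of regular server jobs, the worst case of Figure~\ref{fig:worst-ci} (carry-in executing as late as possible and deadline of the last job at $b$) yields the first two terms of Equation~(\ref{eq:upper-bound-recl}). When aperiodic activations are allowed, the arrival chain can be shifted earlier by up to $\max\{\Delta-Q_i,0\}$ time units (with $\Delta = P_k \bmod P_i$), and the freed gap before the first regular activation is filled by micro-jobs of $S_i$ of cumulative length at most $\max\{\Delta-Q_i,0\} U_i$, which supplies the third term. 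Combined with the universal $(P_k-Q_k)$ cap — valid whenever $S_i$ runs in parallel with $S_k$ on a different processor and therefore imposes no interference — this gives the per-server bound $\min\{\hat{W}_{i,k}, P_k-Q_k\}$.

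Finally, the standard BCL counting argument (if the sum of per-server interferences is strictly less than $m(P_k-Q_k)$, or equal with at least one term strictly below the cap, then $S_k$'s budget window is not fully blocked) applies verbatim and yields the deadline guarantee for $S_k$. The main obstacle I anticipate is in the second step: one must argue carefully that pinning micro-jobs of several \inactive\ servers onto the same processor under sequential reclaiming does not inflate the interference on $S_k$ beyond what $\hat{W}_{i,k}$ records, since each micro-job stream is charged individually to its own $S_i$ in the sum $\sum_{i\neq k}$, and no micro-job can interfere with $S_k$ during the sub-intervals in which $S_k$ itself is dispatched, so the $P_k-Q_k$ cap is preserved.
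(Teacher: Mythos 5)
Your proposal is correct and follows essentially the same route as the paper's proof: micro-jobs pinned to a single processor to emulate sequential reclaiming, the derivation of the extra $\max\{\Delta-Q_i,0\}U_i$ term in $\hat{W}_{i,k}$ from the shifted aperiodic arrival chain, the observation that a server's execution and its donated bandwidth occupy at most one processor at a time so the $(P_k-Q_k)$ cap survives, and then the standard BCL counting argument. The "obstacle" you flag at the end is exactly the point the paper resolves by noting that donation under sequential reclaiming is consumed by at most one executing server at any instant.
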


\begin{proof}
  This proof will deal with the general case and servers may admit
  aperiodic activations.
  
  We first prove that the formulation in Equation
  \eqref{eq:upper-bound-recl} for $\hat{W}_{i,k}$ is indeed the
  maximum workload. And for simplicity, let us assume that
  $\Delta>Q_i$ (if $\Delta \leq Q_i$ the reasoning is the same).
  
  We define $\phi$ to represent the amount of carry-in workload from
  $S_i$, that is, $\phi=\min\{Q_i,P_k\mbox{ mod } P_i\}$.  And $W'$
  denotes the amount of workload from execution of non-carry-in jobs
  of $S_i$ within the problem window such that these jobs have
  absolute deadlines no later than time point $b$; let say there are
  $n'$ such jobs, that is, $n'=\lfloor\frac{P_k}{P_i}\rfloor$.  Then,
  the maximum workload by micro-jobs from $S_i$ (that is, workload
  caused by donating bandwidth to other servers' execution) would be
  $(n' P_i-\frac{W'}{U_i}) U_i + (\Delta-Q_i) U_i = n' P_i U_i-W' +
  (\Delta-Q_i) U_i$.  Remind that $(\Delta-Q_i) U_i$ is the part of
  workload resulted from $S_i$'s aperiodic behaviour.
    
  The total workload due to $S_i$ within the problem window is
  $\phi+W' + (n' P_i U_i - W')+ (\Delta-Q_i) U_i= \phi + n' Q_i +
  (\Delta-Q_i) U_i = \hat{W}_{i,k}$.  As we can see, the interference
  upon the target job due to workload from~$S_i$ can be upper bounded
  by $\hat{W}_{i,k}$.
    
  On the other side, a server can donate its budget only when it is
  inactive and at any time its budget can be used by at most one
  executing server (according to the sequential reclaiming rule).
  That is, the execution of $S_i$ and its budget donation must be in a
  sequential way. Thus, if $S_k$ is schedulable, the interference due
  to $S_i$ within the problem window can still be upper bounded by
  $(P_k-Q_k)$.
    
  In the end, Equation (\ref{eq:bcl-admission}) still holds if
  sequential reclaiming is applied.
\end{proof}

}

\paperOnly{\section{Implementation}
\label{sec:implementation}

The parallel and sequential reclaiming techniques described in the
previous sections have been implemented in the Linux kernel, extending
the {\tt SCHED\_DEADLINE} scheduling policy~\cite{SPE:SPE2335}.  The
modified Linux kernel has been publicly released at
\url{https://github.com/lucabe72/linux-reclaiming}.  These kernel
modifications are based on a previous implementation of the GRUB
algorithm~\cite{Abe14RTLWS}, which however did not guarantee
schedulability of server jobs.

\subsection{Parallel reclaiming implementation}
Parallel reclaiming requires to keep track of the total inactive
bandwidth in a global (per-root domain) variable $U_{\mathsf{inact}}$
which is updated when tasks move from an {\sf Active} states to the
\inactive\ state or vice-versa.

Then, the budget decrease rate of every executing server depends on
this global variable (see Equation
(\ref{eq:mgrub-rule-parallel})). For each executing {\tt
  SCHED\_DEADLINE} task, the scheduler periodically accounts the
executed time, decreasing the current budget (called ``runtime'' in
the Linux kernel) of the task at each tick (or when a context switch
happens).  When a reclaiming strategy is used, the amount of time
decreased from the budget depends on the value of the global variable
$U_{\mathsf{inact}}$. This means that, when a server changes its state
to \inactive\ (or \activeContending) and the value of
$U_{\mathsf{inact}}$ changes all the CPUs should be signalled to
update the budgets of the executing task before $U_{\mathsf{inact}}$
is changed. Such an inter-processor signalling can be implemented
using Inter-Processor Interrupts (IPIs). However, this may
substantially increase the overhead of the scheduler and increase its
complexity; furthermore the combination of global variables and
inter-processor interrupts may lead to race conditions very difficult
to identify.

Therefore, parallel reclaiming has been implemented by introducing
a small approximation: we avoid IPIs and the value of
$U_{\mathsf{inact}}$ is sampled only at each scheduling tick.
In this regard it is worth noting that every real scheduler
implements an approximation of the theoretical scheduling algorithm:
for example, {\tt SCHED\_DEADLINE} accounts the execution time at every
tick (hence, a task can consume up to 1 tick more than the reserved
runtime / budget).

Despite of the approximations introduced when implementing parallel
reclaiming, during our experiments with randomly generated tasks we
never observed any server deadline miss, probably because the GFB
schedulability test is pessimistic and hence a certain amount of slack
is available in the great majority of cases.
It is important to underline, however, that from a purely theoretical
point of view our current implementation of the parallel reclaiming
rule cannot guarantee the respect of every server deadline.

\subsection{Sequential reclaiming implementation}
Implementing sequential reclaiming is easier under certain
assumptions. In particular, we need to make sure that the code
executing on the $p$-th runqueue only accesses variables local to the
same runqueue.

In sequential reclaiming, we need to provide one variable
$U_{\mathsf{inact}}$ for each runqueue. When a server on the $p$-th
runqueue becomes \inactive, we update the corresponding variable; at
this point, only the budget of the task executing on this runqueue
needs to be updated. When a server becomes {\sf Active}, we make sure that
the corresponding handler is executed on the same processor where the
server was previously suspended and became \inactive\ (the task will
be migrated later, if necessary). Therefore, we just need to modify
the local $U_{\mathsf{inact}}$ and update the budget of the task
executing on this CPU.  While this may not be the optimal way to
distribute the spare bandwidth, we do not need any IPI 
to implement the exact reclaiming rule (the only approximations
are the ones introduced by the {\tt SCHED\_DEADLINE} accounting
mechanism).

Notice that for both parallel and sequential reclaiming the transition
between \activeNonContending\ and \inactive\ is handled by setting up
an {\em inactive timer} that fires when such a transition should
happen (for more details, see~\cite{Abe14RTLWS}, where such a time is
named {\em 0-lag time}).

\section{Experimental Evaluation}
\label{sec:experiments}
The effectiveness of the reclaiming algorithms has been evaluated
through some experiments with the patched Linux kernel described in
the previous section.  The kernel version used for all the experiments
is based on version 3.19 (in particular, the {\tt global-reclaiming}
and {\tt refactored-reclaiming} branches of the {\tt linux-reclaiming}
repository have been used).  All the tests were executed on a
$4$-cores Intel Xeon CPU.


\subsection{Randomly generated tasks}
\label{sec:rand-gener-tasks}
The first set of experiments has been performed by executing sets of
randomly generated real-time tasks with the {\tt rt-app}
application\footnote{\url{https://github.com/scheduler-tools/rt-app}}.

A set of $100$ task sets with utilisation $U = 2.5$ has been generated
using the {\tt taskgen}~\cite{Emberson10-WATERS} script, and the task
sets that are schedulable according to the BCL and GFB tests have been
identified. Some first {\tt \mbox{rt-app}} runs confirmed that these
task sets can actually run on the Linux kernel (using {\tt
  SCHED\_DEADLINE}) without missing any deadline.
Then, the reclaiming mechanisms have been tested as follows: for each
schedulable task set $\gamma=\{(C_i, T_i)\}$ generated by {\tt
  taskgen}, a task set $\Gamma=\{\tau_i'\}$ has been generated, where
$\tau_i'$ has period $T_i$ and execution time uniformly distributed
between $\alpha \gamma C_i$ and $\gamma C_i$ (hence, the WCET of task
$\tau_i$ is $\gamma C_i$), and is scheduled by a server with
parameters $(Q_i = C_i, P_i = T_i)$.  Notice that $\gamma$ represents
the ratio between the task's WCET and the maximum budget allocated to
the task; hence increasing $\gamma$ increases the amount of CPU time
that the task needs to reclaim to always complete before its deadline;
on the other hand, $\alpha$ represents the ratio between the BCET and
the WCET of a task (so, $\alpha \leq 1$). Decreasing $\alpha$
increases the amount of CPU time that a task can donate to the other
tasks through the reclaiming mechanism.  When $\gamma \leq 1$, the
WCET of each task is smaller than the maximum budget $Q_i$ used to
schedule the task, so all the tasks' deadlines will be respected. The
experiments confirmed this expectation.  When $\gamma > 1$, instead,
the situation is more interesting because some deadlines can be missed
and enabling the reclaiming mechanism allowed to reduce the amount of
missed deadlines.

\begin{figure}[t]
  \begin{center}
    \includegraphics[width=.9\columnwidth]{Results/Exp1/exp1.pdf}
  \end{center}
  \vspace{-5mm}
  \caption{Percentage of missed deadlines when using {\tt rt-app} with
           different reclaiming strategies. $\alpha$ varies from $0.2$ to
           $0.8$, and $\gamma = 1.1$.}
  \label{fig:exp1}
\end{figure}
Figure~\ref{fig:exp1} reports the percentage of missed deadlines for
$\gamma=1.1$ as a function of $\alpha$ when using no reclaiming,
parallel reclaiming and sequential reclaiming. For parallel and
sequential reclaiming, the results are reported both when initialising
$U_{inact}$ to $0$ and when initialising it according to
Equations~\eqref{eq:parallel-init} and~\eqref{eq:extra-sequential}
(reclaiming the initial spare utilisation).  From the figure, it can
be seen that both reclaiming algorithms allow to reduce the percentage
of missed deadlines; however, parallel reclaiming tends to perform
better than sequential reclaiming. When $\alpha$ increases, the
average utilisation of the tasks increases, and the amount of CPU time
that can be reclaimed decreases; hence, the differences between the
efficiency of the various algorithms become more evident; however,
parallel reclaiming performance do not seem to depend too much on the
value used to initialise $U_{inact}$. This happens because with a
small value of $\gamma=1.1$, the tasks do not need to reclaim much
execution time.

\begin{figure}[t]
  \begin{center}
    \includegraphics[width=.9\columnwidth]{Results/Exp1/exp13.pdf}
  \end{center}
  \vspace{-5mm}
  \caption{Percentage of missed deadlines when using {\tt rt-app} with
           different reclaiming strategies. $\alpha$ varies from $0.2$ to
           $0.8$, and $\gamma = 1.3$.}
  \label{fig:exp13}
\end{figure}
Increasing the value of $\gamma$ to $\gamma = 1.3$
(Figure~\ref{fig:exp13}), the tasks need to reclaim more execution
time and the effect of $U_{inact}$ initialisation becomes more
evident. In particular, with $\alpha=0.8$ the tasks cannot donate
enough execution time, so if $U_{inact}$ is initialised to $0$ (only
the utilisation of the ``existing tasks'' can be reclaimed) the two
reclaiming algorithms (parallel and sequential) do not seem to be very
effective (the percentage of missed deadlines is similar to the ``No
Reclaiming'' case). If, instead, $U_{inact}$ is initialised according
to Equations~\eqref{eq:parallel-init} and~\eqref{eq:extra-sequential}
the reclaiming algorithms are able to reclaim the spare utilisation
and are able to reduce the percentage of missed deadlines.

Some partial conclusions can be drawn from this set of experiments. In
general, the parallel reclaiming strategy performs better than
sequential reclaiming. This is probably due to the fact that parallel
reclaiming tends to fairly distribute the spare bandwidth across all
processors, whereas in the current implementation of the sequential
reclaiming we have no control on which server uses the reclaimed
bandwidth. In fact, with sequential reclaiming, in the worst-case all
reclaiming could go on one single processor and benefit only the tasks
that by chance execute on that processor. 

On the other hand, using sequential reclaiming we can admit a larger
number of tasks sets, because the mechanism is valid both for the GFB
and a modified version of the BCL test. Furthermore, as previously
discussed a precise implementation of the parallel reclaiming is more
costly in terms of overhead and programming effort. For the moment we
conclude that sequential reclaiming seems to be preferable from an
implementation point of view. However we acknowledge that further
investigation is needed to perform a full assessment.

\subsection{Experiments on real applications}

In the next set of experiments, the performance of a real application
(the {\tt mplayer} video
player\footnote{\url{http://www.mplayerhq.hu}}) has been evaluated.
In particular, {\tt mplayer} has been modified to measure the
``Audio/Video delay'', defined as the difference of the presentation
timestamps of two audio and video frames that are reproduced
simultaneously.  A negative value of the Audio/Video delay means that
video frames are played in advance with respect to the corresponding
audio frames, while a positive value indicates that the video is late
with respect to the audio (probably because {\tt mplayer} has not been
able to decode the video frames in time).  When this value becomes too
large, audio and video are perceived out of synch, and the quality
perceived by the user is badly affected.

When {\tt mplayer} is executed as a {\tt SCHED\_DEADLINE} task, it is
pretty easy to set the reservation period $P=1/fps$, where $fps$ is
the frame rate (in frames per second) of the video; however, correctly
dimensioning the maximum budget/runtime $Q$ is much more difficult. If
$Q$ is slightly under-dimensioned (larger than the average time needed
to decode a frame, but smaller than the maximum time), the Audio/Video
delay can become too large affecting the quality, and a reclaiming
mechanism can help in improving the perceived quality.

\begin{figure}[t]
  \begin{center}
    \includegraphics[width=.9\columnwidth]{Results/MPlayer/av.pdf}
  \end{center}
  \vspace{-5mm}
  \caption{Audio/Video delay experienced by {\tt mplayer} reproducing a video
           when scheduled with {\tt SCHED\_DEADLINE}. The three plots indicate
           {\tt mplayer} executing alone or together with other real-time
           tasks without reclaiming, or with the M-GRUB reclaiming
           mechanism.}
  \label{fig:mplayer}
\end{figure}
For example, Figure~\ref{fig:mplayer} shows the evolution of the
Audio/Video delay experienced by {\tt mplayer} when reproducing a
full-D1 mpeg4 video (with vorbis audio) with $Q=4.5ms$ and $P=40ms$
(the video is $25$ frames per second, so $P=1s / 25 = 40ms$). The
experiment has been repeated executing {\tt mplayer} alone on an idle
4-cores system (``No Reclaiming, idle system'' line) or together with
other real-time tasks (implemented by {\tt rt-app}, in the ``No
Reclaiming, loaded system'' line). In the ``loaded system'' case, the
total utilisation was about $2.2$ and the task set resulted to be
schedulable according to both BCL and GFB. As it can be noticed, in
both cases the Audio/Video delay continues to increase, and becomes
noticeable for the user. When the M-GRUB reclaiming mechanism is
activated, {\tt mplayer} can use some spare time left unused from the
other tasks, and the Audio/Video delay is unnoticeable (see
``Reclaiming, loaded system'').

The experiment has been repeated with parallel and sequential
reclaiming, obtaining identical results. Hence, in this specific case
using one policy instead of the other does not bring any particular
advantage.

Notice that the scheduling parameters (reservation period and maximum
budged) of the {\tt rt-app} real-time tasks have been dimensioned so
that no deadline is missed. During the experiments, it has been
verified that the number of missed deadlines for such tasks is
actually $0$, even when the reclaiming mechanism is enabled. 


}

\section{Conclusions}
\label{sec:conclusions}

In this \paperOnly{paper}\reportOnly{report}, we proposed two
different reclaiming mechanisms for real-time tasks scheduled by G-EDF
on multiprocessor platforms, named \emph{parallel} and
\emph{sequential reclaiming}. \paperOnly{After proving their
  correctness, we described their implementation on the Linux OS, and
  compared their performance on synthetic experiments.  Parallel
  reclaiming requires more approximations in its implementation,
  however, in average it performs better than the sequential
  reclaiming. On the other hand, sequential reclaiming can guarantee
  the real-time schedulability of a large number of tasks, as it
  allows to use a different admission test, and is characterised by a
  simpler implementation.  However, it performs slightly worse in
  average.} In the future we plan to conduct further investigations
comparing the two strategies, and to use more advanced admission
tests.

\bibliographystyle{abbrv}
\bibliography{retis}

\end{document}